\newcommand{\bR}{\mathbb R}
\newtheorem{assumption}{Assumption}
\begin{document}
	\title{Tree inference with varifold distances}
	%
	%
	\author{Elodie Maignant \orcidlink{0000-0003-3006-5174} \and
			Tim Conrad \orcidlink{0000-0002-5590-5726} \and
            Christoph von Tycowicz \orcidlink{0000-0002-1447-4069}}
	\authorrunning{E. Maignant et al.}
	%
	\institute{Zuse Institute Berlin, Takustraße 7, 14195 Berlin, Germany \\ \email{\{maignant,conrad,vontycowicz\}@zib.de}}
	\maketitle              
	\begin{abstract}
        In this paper, we consider a tree inference problem motivated by the critical problem in single-cell genomics of reconstructing dynamic cellular processes from sequencing data. In particular, given a population of cells sampled from such a process, we are interested in the problem of ordering the cells according to their progression in the process. This is known as trajectory inference. If the process is differentiation, this amounts to reconstructing the corresponding differentiation tree. One way of doing this in practice is to estimate the shortest-path distance between nodes based on cell similarities observed in sequencing data. Recent sequencing techniques make it possible to measure two types of data: gene expression levels, and RNA velocity, a vector that predicts changes in gene expression. The data then consist of a discrete vector field on a (subset of a) Euclidean space of dimension equal to the number of genes under consideration. By integrating this velocity field, we trace the evolution of gene expression levels in each single cell from some initial stage to its current stage. Eventually, we assume that we have a faithful embedding of the differentiation tree in a Euclidean space, but which we only observe through the curves representing the paths from the root to the nodes. Using varifold distances between such curves, we define a similarity measure between nodes which we prove approximates the shortest-path distance in a tree that is isomorphic to the target tree.
		\keywords{varifold  \and tree \and single-cell.}
	\end{abstract}
	
	\section{Introduction}
	    RNA sequencing technologies quantify RNA molecules in biological samples to determine gene expression levels. Single-cell RNA sequencing (scRNA-seq) enables such information to be extracted at the level of each individual cell. In practice, the gene expression of one cell is displayed as a vector, each coordinate of which contains the expression level of a given gene. Variation in gene expression across a population of cells reflects an underlying dynamic biological process, such as development, or disease progression. It is this process that we are interested in characterizing. However, because single-cell sequencing is a destructive technique, the stage in the process at which each cell was observed is not known. To reconstruct the process, one must order the cells along a developmental or temporal trajectory based on similarities in their expression profiles. This is known in the literature as trajectory inference \cite{saelens_comparison_2019}. Assuming that the underlying biological process is differentiation -- a hierarchical phenomenon -- cell similarities should exhibit a tree structure. The trajectory inference problem is then a tree inference problem. This problem can be decomposed into two sub-problems. First, to construct a similarity measure between cells based on the sequencing data, for example by comparing gene expression levels between cells gene by gene. Then, to find a tree whose shortest-path distance coincides with the similarity measure on the observed nodes. This second sub-problem is standard and well-studied \cite{kalaghatgi_family_2016}. Recent advances in scRNA-seq allow the distinction between spliced and unspliced RNA, enabling estimation of RNA velocity -- a vector that indicates the direction and rate of changes in gene expression levels for each cell \cite{manno_rna_2018,gorin_rna_2022}. Among the approaches that take RNA velocity into account to solve the trajectory inference problem, several rely on similarity measures that assess only local consistency between velocities and gene expression vectors \cite{zhang_inference_2021, lange_cellrank_2022}. Instead, if we interpret the data as a discrete vector field on a subset of a Euclidean space -- of dimension equal to the number of genes under consideration -- and integrate it, we can then recover for each cell a curve describing the evolution of its gene expression levels from some initial stage to its current stage, and compare such global observations. The key idea here is that the relative position of these curves in the space should reflect the hierarchy between the paths from the root to the nodes in the differentiation tree. 
    \begin{figure}[!h]
        \centering
        \vspace{.2cm}
        \includegraphics[width=.72\textwidth]{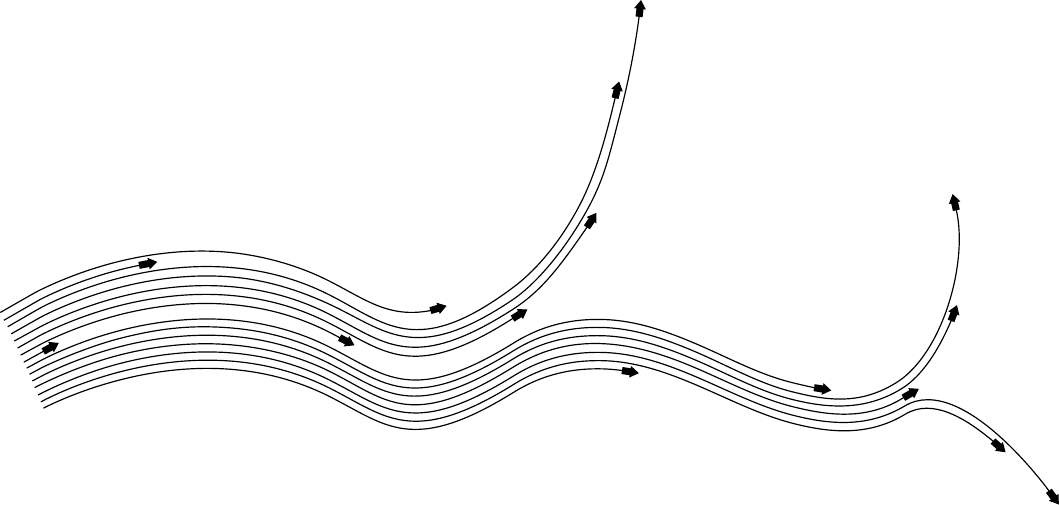}
        \vspace{.4cm}
        \caption{Schematic representation of the integral curves of a RNA velocity field.} 
        \label{fig:velocity}
    \end{figure}
    
        \noindent If so, we show that kernel metrics between such curves represented as oriented varifolds \cite{kaltenmark_general_2017} define a similarity measure between the nodes of the tree that is consistent with its topology. The paper is structured as follows. In the next section, we formalize the tree inference problem we have just identified and we outline our solution. In the third section, we provide some theoretical background on distances between curves represented as oriented varifolds. In the fourth section, we derive from these metrics a similarity measure between nodes, as announced, and we prove it approximates the shortest-path distance in a tree that is isomorphic to the target tree.
	
	\section{Description of the problem}
         Let $T = (V, E, r)$ be a rooted tree embedded into some $\bR^n$ via the two injective maps
    \begin{equation}
            \begin{array}{ccl}
                i \in V \mapsto x_i \in \bR^n
            \end{array} 
            \qquad \text{and} \qquad
            \begin{array}{ccl}
                (i, j) \in E \mapsto [x_i, x_j] \subset \bR^n
            \end{array} 
    \end{equation}
        where $[x_i, x_j]$ denotes here a smooth curve segment of $\bR^n$ connecting $x_i$ and $x_j$ which \textbf{need not be the line segment} from $x_i$ to $x_j$. 
    \begin{figure}[!h]
        \centering
        \includegraphics[width=.7\textwidth]{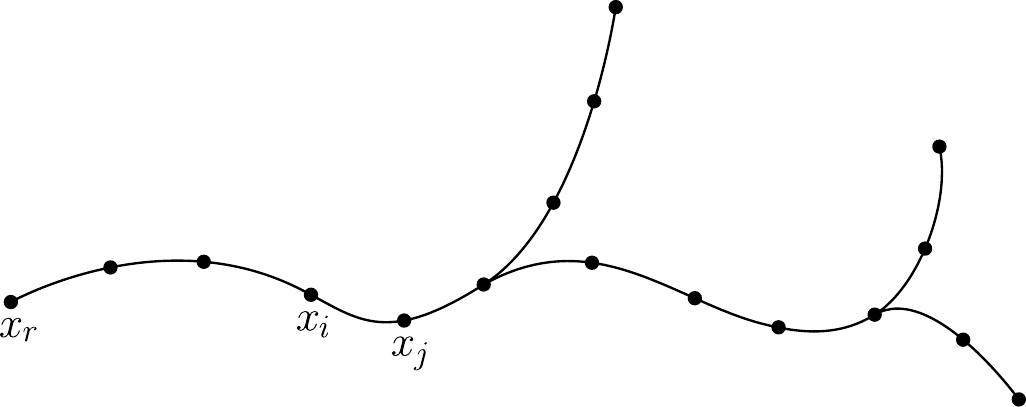}
        \caption{A rooted tree embedded in the plane.} 
        \label{fig:tree}
    \end{figure}
    
        \noindent We ask for the embedding to preserve the topology of the tree, that is for the respective embeddings of two edges to intersect only if the edges share a common node, and only at the image of this common node. Additionally, we require that the embedding of the path from the root $r$ to any node is also a smooth curve, and we extend the notation $[x, y]$ to the curve segment joining any two points $x$ and $y$ on the embedding of a common path from the root to some node. In particular, the map
        \begin{equation}
            i \in V \to [x_r, x_i] \subset \bR^n
        \end{equation}
        should denote the embedding of the path from the root $r$ to the node $i$. Finally, we assume that the embedding of any such path is a regular and orientable curve, so that it defines an \textit{oriented varifold} \cite{kaltenmark_general_2017}. A natural orientation would be, for example, from the root to the end node. We are then interested in reconstructing $T$ from the maps $i \to [x_r, x_i]$ alone.
    \medbreak   
        A straightforward approach to solving this problem would also be to reconstruct the hierarchy between any two nodes $i$ and $j$ by testing whether $x_i$ belongs to the path from the root to $x_j$ or conversely, for example. There are two main reasons for which we do not follow such an approach. First, we might not observe all the nodes of $T$ in practice. Second, the test described above is not very robust to noise, especially in the case of high-dimensional observations. In other words, this approach does not apply to our scenario, pictured in Figure \ref{fig:velocity}.
    \medbreak    
        Instead, we propose to estimate the shortest-path distance in the tree $T$. As we discussed in the introduction, this is a common approach. The shortest-path distance quantifies how far two nodes are from their most recent common ancestor. Equivalently, we propose to quantify how long it takes for two paths to connect in $\bR^n$ relying on distances between curves. Provided that the $n$-dimensional embedding of $T$ reflects its topology well enough, such quantity should recover, if not the shortest-path distance on $T$, a distance that is topologically equivalent (and characterizes $T$ just as much). To ensure this, we make three further assumptions.
    \begin{assumption}
        \label{A1} 
        The distance between two adjacent edges of the same path is increasing as they move further from their common node. Let $i, j, k$ be three nodes of $T$ such that $i < j < k$ meaning $i$ is the parent node of $j$ which is itself the parent node of $k$. Then for all $x, x' \in [x_i, x_j]$ and $y, y'\in [x_j, x_k]$ we have that
    \begin{equation*}
        \ell([x, y]) \leq \ell([x', y']) \Rightarrow \|x - y\| \leq \|x' - y'\|
    \end{equation*}   
        where $\ell(\gamma)$ denotes the arc length of the curve $\gamma$.
    \begin{figure}[!h]
        \centering
        \includegraphics[width=.3\textwidth]{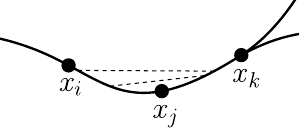}
        \vspace{-.5cm}
        \label{fig:hyp1}
    \end{figure}
    \end{assumption}
    \begin{assumption}
        \label{A2}
        The angle between two adjacent edges of branching paths is increasing as they move further from their common node. Let $i, j, k$ be three nodes of $T$ such that $i < j$ and $i<k$. Then for all $x, x' \in [x_i, x_j]$ and $y, y'\in [x_i, x_k]$ we have that
    \begin{equation*}
        \ell([x_i, x]) +  \ell([x_i, y]) \leq \ell([x_i, x']) + \ell([x_i, y'])\Rightarrow \|\vec{t}(x) - \vec{t}(y)\| \leq \|\vec{t}(x') - \vec{t}(y')\|
    \end{equation*}  
        where the map $x \mapsto \vec{t}(x)$ is defined in the context of oriented varifolds and denotes the unit oriented tangent vector to the curve it is evaluated on.
   \begin{figure}[!h]
        \centering
        \includegraphics[width=.175\textwidth]{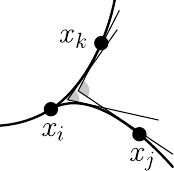}
        \label{fig:hyp2}
        \vspace{-.5cm}
    \end{figure}
    \end{assumption}
    \begin{assumption}
        \label{A3}
        Branching paths deviate sufficiently fast from their initial direction. Let $i, j, k$ be three nodes of $T$ such that $i < j$ and $i<k$. Then there exists some neighborhood $U_i$ of $x_i$ and $0 < a < 2$ such that for all $x \in U_i \cap [x_i, x_j]$ and all $x \in U_i\cap[x_i, x_k]$ we have
    \begin{equation*}
        \|\vec{t}(x) - \vec{t}(x_i)\| \geq \ell([x_i, x])^a.
    \end{equation*}
    \end{assumption}
    The first two assumptions are quite natural and essentially state that we can infer from the embedding whether three consecutive nodes are aligned along a common path originating from the root or not. The last assumption is stronger than the second, primarily technical, and guarantees that we can accurately identify branching nodes on the basis of the tangent vectors alone. Ultimately, just as the approach and results we present in this paper, the three assumptions are only really necessary when the data are noisy and the tree looks more like Figure \ref{fig:velocity} than Figure \ref{fig:tree}.
        
	\section{Distances between curves as oriented varifolds}
        \label{sec:varifold}
        For the choice of a distance between curves, we turn to varifold distances. The reasons for this choice are threefold: (i) The distance between two curves that coincide on part of their domain is simply the distance between the disjoint domains. (ii) Varifold distances are robust to deformations. (iii) Varifold distances are invariant to parameterization while remaining computationally efficient. We follow the framework by Kaltenmark et al.~\cite{kaltenmark_general_2017} known as \textit{oriented varifolds}. This representation provides a foundation for a general class of similarity metrics that apply to curves (and surfaces) as well as to reunions thereof.
    \medbreak  
        Let $X$ be a smooth curve of finite length embedded in $\mathbb{R}^n$ and $W$ be a space of test functions on $\mathbb{R}^n \times \mathbb{S}^{n-1}$, the oriented varifold $\mu_X$ associated to $X$ is the element of the dual $W^\ast$ given by 
    \begin{equation}
        \mu_X: W \to \mathbb{R}, \;\omega \mapsto \int_X \omega(x,\vec{t}(x)) \;d\ell(x),
    \end{equation}
        where the unit length vector $\vec{t}(x) \in \mathbb{S}^{n-1}$ encodes the oriented tangent space to $X$ at $x$. The representation $\mu_X$ depends on the shape and orientation of tangent spaces but not on a choice of a parameterization. 
    \medbreak    
        Furthermore, for distinct curves $X$ and $Y$ we have that $\mu_{X \cup Y} = \mu_X + \mu_Y$. This additivity property gives us in particular (i). Distances on $W^\ast$ and thus between curves are induced from a \textit{Reproducing Kernel Hilbert Space} structure on the set of test functions $W$ defined by a positive definite kernel $k : (\bR^n \times \mathbb{S}^{n-1}) \times (\bR^n \times \mathbb{S}^{n-1}) \to \mathbb{R}$. In particular, the reproducing property guarantees that evaluation functionals $\delta_{(x,t)}: \omega \mapsto \omega(x,t)$ can be represented in terms of the inner product $\delta_{(x,t)}(\omega) = \langle k_{(x,t)}, \omega \rangle_W$ with the reproducing kernel $k_{(x,t)}(\cdot) = k((x,t),\cdot)$. This allows us to define an inner product on the dual by $\langle \delta_{(x_1,t_1)}, \delta_{(x_2,t_2)} \rangle_{W^\ast} = \langle k_{(x_1,t_1)}, k_{(x_2,t_2)}\rangle_W = k((x_1,t_1),(x_2,t_2))$, where the last equality follows from the construction of $W$ from $k$. 
    \medbreak
        While there is a wide range of feasible kernels, we restrict ourselves to tensor products of Gaussian kernels $k_\sigma(x,y)=e^{-\|x - y\|^2/\sigma^2}$ that are of the form $k((x_1,t_1), (x_2,t_2))=k_{\sigma_x}(x_1,x_2)k_{\sigma_t}(t_1,t_2)$. Then, we write the inner product for oriented varifolds as  
	\begin{align}
        \langle \mu_{X}, \mu_{Y}\rangle_{W^\ast} &= \Bigl\langle \int_X \delta_{(x,\vec{t}(x))} d\ell(x), \int_Y \delta_{(y,\vec{t}(y))} d\ell(y)\Bigr\rangle_{W^\ast} \\
        &= \iint_{X \times Y} e^{-\frac{\|x - y\|^2}{\sigma_x^2}} e^{-\frac{\|\vec{t}(x) - \vec{t}(y)\|^2}{\sigma_t^2}} \;d\ell(x)d\ell(y).
	\end{align}
         The separable structure enhances the interpretability of contributions from spatial and orientation characteristics. Intuitively, $k_{\sigma_x}$ measures proximity between point positions, whereas $k_{\sigma_t}$ quantifies the proximity between the associated tangent spaces. For our choice of kernel, the identification $X \to \mu_X$ is injective such that the restriction of the metric on $W^\ast$ defines a distance on the space of curves:
    \begin{equation}
		d_{W^\ast}(X, Y) = \|\mu_{X} - \mu_{Y}\|_{W^\ast}.
	\end{equation}
        This distance can be efficiently approximated due to closed-form expressions for discrete curves such as polygonal chains and is insensitive to the sampling of discrete curves due to the parameterization invariance. Finally, it is robust with respect to small diffeomorphic deformations. Precisely, for two smooth curves $X$ and $Y$ we have
	\begin{equation}
		d^2(X, \phi.Y) \underset{\substack{\vspace{.05cm} \\ \|\phi - id\|_\infty \to 0 \vspace{.05cm} \\ \|d\phi - I\|_F \to 0}}{\to} d^2(X, Y).
	\end{equation}
	    Similar properties have been proven in related literature \cite{glaunes_these_2005, charlier_fshape_2017}. In practice, for the distance between $X$ and $Y$ to be robust to a deformation $\phi$ of $Y$, we need to choose $\sigma_x$ significantly greater than $\|\phi - id\|_\infty$ and $\sigma_t$ significantly greater than $\|d\phi - I\|_F$.
	
	\section{Construction of a shortest-path distance}
	   	Let us finally introduce the following similarity measure between the nodes of the tree $T$
	\begin{equation}
		\Delta(i, j) = d_{W^\ast}([x_r, x_i], [x_r, x_j])^2
	\end{equation}
		where we recall that $[x_r, x_i]$ and $[x_r, x_j]$ denote the embedding in $\bR^n$ of the paths in $T$ from the root $r$ to nodes $i$ and $j$ respectively. Then $\Delta$ approximates the shortest-path distance on a weighted tree $T'$ that is isomorphic to $T$ in the following sense:
	\begin{proposition}
        \label{prop:shortest_path_distance}
		Let $i, j$ be two nodes of $T$. Let $i=i_1,\dots,i_N=j$ denote the shortest path from $i$ to $j$. Then we have the equivalence
	\begin{equation}
		\Delta(i, j) \underset{\substack{\vspace{.05cm} \\ \sigma_x, \sigma_t \to 0 \vspace{.1cm} \\ \sigma_x \asymp \sigma_t}}{\sim} \sum_{k=1}^{N-1}\Delta(i_k, i_{k+1})
	\end{equation}
        where $\sigma_x \asymp \sigma_t$ means here that there exist $k_1, k_2 > 0$ such that $k_1 \sigma_x \leq \sigma_t \leq k_2 \sigma_x$.
	\end{proposition}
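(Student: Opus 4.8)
The plan is to turn the identity $\Delta(i,j)=\|\mu_{[x_r,x_i]}-\mu_{[x_r,x_j]}\|_{W^\ast}^2$ into a statement purely about the edges along the path from $i$ to $j$, using varifold additivity to cancel everything the two root-paths have in common, and then to show that the cross-interactions between distinct edges are of strictly smaller order than the self-terms as $\sigma_x,\sigma_t\to 0$. Let $m=i_p$ be the most recent common ancestor of $i$ and $j$, so the path ascends $i=i_1,\dots,i_p=m$ and then descends $m=i_p,\dots,i_N=j$; write $A=[x_m,x_i]$, $B=[x_m,x_j]$ for the two pendant subpaths and $E_1,\dots,E_{N-1}$ for the traversed edges. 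Since $[x_r,x_i]=[x_r,x_m]\cup A$ and $[x_r,x_j]=[x_r,x_m]\cup B$, additivity gives $\mu_{[x_r,x_i]}-\mu_{[x_r,x_j]}=\mu_A-\mu_B$, hence $\Delta(i,j)=\|\mu_A-\mu_B\|^2$; the same cancellation for adjacent nodes (whose root-paths differ by exactly $E_k$) gives $\Delta(i_k,i_{k+1})=\|\mu_{E_k}\|^2$, so $\sum_k\Delta(i_k,i_{k+1})=\sum_k\|\mu_{E_k}\|^2$. Writing $\mu_A=\sum_{k<p}\mu_{E_k}$ and $\mu_B=\sum_{k\ge p}\mu_{E_k}$ and expanding both sides, the difference $\Delta(i,j)-\sum_k\Delta(i_k,i_{k+1})$ is a finite signed sum of cross terms $\langle\mu_{E_k},\mu_{E_l}\rangle$ with $k\ne l$, so it suffices to bound each of these by $o\bigl(\sum_k\|\mu_{E_k}\|^2\bigr)$.

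Next I would fix the common scale. A Laplace-type estimate shows $\|\mu_{E_k}\|^2$ concentrates on the diagonal of $E_k\times E_k$; since the unit tangent of a regular curve is Lipschitz and $\sigma_x\asymp\sigma_t$, the orientation kernel $e^{-\|\vec t(x)-\vec t(y)\|^2/\sigma_t^2}$ there contributes only a factor bounded above and below by positive constants, and since the topology-preserving embedding is injective (so arc length and chord length are comparable along each edge), one obtains $\|\mu_{E_k}\|^2\asymp\sigma_x$ and hence $\sum_k\Delta(i_k,i_{k+1})\asymp\sigma_x$. It remains to show $\langle\mu_{E_k},\mu_{E_l}\rangle=o(\sigma_x)$ for each $k\ne l$, of which there are three kinds of pairs. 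If $E_k$ and $E_l$ share no vertex, they are disjoint compact sets and the spatial kernel gives $\langle\mu_{E_k},\mu_{E_l}\rangle=O(e^{-c/\sigma_x^2})$. If they are consecutive along $A$ or along $B$, they meet at a vertex $v$ through which a root-path runs, the aligned configuration of Assumption~\ref{A1}; parameterising each by arc length from $v$ and combining the monotonicity of Assumption~\ref{A1} with the arc--chord comparison for one of the edges gives $\|x-y\|\gtrsim\ell([v,x])+\ell([v,y])$, and integrating in $w=\ell([v,x])+\ell([v,y])$ yields $\langle\mu_{E_k},\mu_{E_l}\rangle=O(\sigma_x^2)$.

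The only genuinely delicate pair, and the main obstacle, is $(E_{p-1},E_p)$: the two edges issuing from the branching vertex $m$. After discarding an $O(e^{-c/\sigma_x^2})$ tail, one integrates over a small neighbourhood of $x_m$ in $E_{p-1}\times E_p$, parameterised by arc lengths $s,t$ from $x_m$. Because the root-to-node paths are smooth, both branches carry the \emph{same} unit tangent $\vec t(x_m)$ at $m$, so $x(s)-y(t)=(s-t)\vec t(x_m)+O\bigl((s+t)^3\bigr)$ in the direction of $\vec t(x_m)$ (the quadratic terms being orthogonal to it); thus the spatial kernel alone only sees an osculation and contributes a term of order $\sigma_x$ — the same order as a self-term — which is not enough. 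The orientation kernel must be brought in: Assumption~\ref{A2} forces $\|\vec t(x(s))-\vec t(y(t))\|$ to be a nondecreasing function of $s+t$ alone, whose boundary trace $s\mapsto\|\vec t(x(s))-\vec t(x_m)\|$ is bounded below by $s^{a}$ with $a<2$ by Assumption~\ref{A3}; monotonicity then gives $\|\vec t(x(s))-\vec t(y(t))\|\gtrsim(s+t)^a$, so the orientation kernel is $\lesssim e^{-(s+t)^{2a}/(C\sigma_t^2)}$. Inserting this together with the bound $\|x(s)-y(t)\|\gtrsim|s-t|$ valid off a thin cubic neighbourhood of the diagonal, and splitting the $(s,t)$-integral into the near-diagonal part and the rest, each piece is $O(\sigma_x^{1+\varepsilon})$ for some $\varepsilon>0$ — it is precisely the strict inequality $a<2$, funnelled through Assumption~\ref{A2} and balanced against $\sigma_x\asymp\sigma_t$, that makes $\varepsilon$ positive — so $\langle\mu_{E_{p-1}},\mu_{E_p}\rangle=o(\sigma_x)$.

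Collecting the three estimates, $\Delta(i,j)=\sum_k\Delta(i_k,i_{k+1})+o(\sigma_x)=\bigl(1+o(1)\bigr)\sum_k\Delta(i_k,i_{k+1})$, which is the asserted equivalence; when $i$ is an ancestor of $j$ (or conversely) there is no branching pair and only the two routine estimates are needed, and the cases $i=j$ and $N=2$ are trivial. To summarise the difficulty: smoothness of the embedding through the branch point renders the positional information insufficient at scale $\sigma_x$, and the whole asymptotic hinges on quantifying \emph{how fast the two branches separate in their tangent directions} near the most recent common ancestor, which is exactly what Assumptions~\ref{A2}--\ref{A3} encode.
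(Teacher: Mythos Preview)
Your proposal is correct and follows essentially the paper's route: cancel the shared root-segment by varifold additivity, expand $\|\mu_A-\mu_B\|^2$ over the edges $E_k$ of the shortest path, and show that each cross term $\langle\mu_{E_k},\mu_{E_l}\rangle$ is negligible compared with $\sum_k\|\mu_{E_k}\|^2$, handling consecutive edges via Assumption~\ref{A1} and the single branching pair at $m$ via Assumptions~\ref{A2}--\ref{A3}. The only difference is in execution: the paper packages the two adjacent-edge estimates as separate lemmas proved by an $\varepsilon$-splitting at scale $\sigma_x^p$ (with $\tfrac12<p<1$, resp.\ $\tfrac12<p<\tfrac1a$) together with a lower bound on the self-terms, whereas you first pin down the order $\|\mu_{E_k}\|^2\asymp\sigma_x$ and then bound the cross terms by direct integration in $w=s+t$, which yields slightly sharper constants but is otherwise the same argument.
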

		\noindent Note that this last condition is essentially technical and allows us to control the convergence of $\sigma_x$ and $\sigma_t$ - acting at two different scales - simultaneously. The result is a direct consequence of the two following lemmas.
	\begin{lemma}
		Let $i, j$ and $k$ be three adjacent nodes of $T$ such that $i < j < k$. Assume that Assumption \ref{A1} holds. Then we have
	\begin{equation}
        \label{eq:L1}
		\langle \mu_{[x_i, x_j]}, \mu_{[x_j, x_k]}\rangle_{W^\ast} \underset{\substack{\vspace{.05cm} \\ \sigma_x, \sigma_t \to 0 \vspace{.1cm} \\ \sigma_x \asymp \sigma_t}}{=} o\left(\|\mu_{[x_i, x_j]}\|_{W^\ast}^2 + \|\mu_{[x_j, x_k]}\|_{W^\ast}^2\right).
	\end{equation}
	\end{lemma}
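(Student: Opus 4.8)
The plan is to bound the cross term $\langle \mu_{[x_i, x_j]}, \mu_{[x_j, x_k]}\rangle_{W^\ast}$ from above by $O(\sigma_x^2)$ and each self‑norm from below by $\Theta(\sigma_x)$, so that the quotient is $O(\sigma_x)\to 0$; throughout, the regime $\sigma_x\asymp\sigma_t$ lets one treat both Gaussian factors at the single scale $\sigma_x$. For the cross term I would first bound the orientation kernel $e^{-\|\vec{t}(x)-\vec{t}(y)\|^2/\sigma_t^2}$ by $1$: since the concatenation $[x_i,x_j]\cup[x_j,x_k]=[x_i,x_k]$ is a smooth regular curve, its unit tangent is continuous at $x_j$, so this kernel is in fact close to $1$ near the branching point and provides no help — fortunately it is not needed. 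Next, parametrize $[x_i,x_j]$ by arc length $s\in[0,L_1]$ and $[x_j,x_k]$ by arc length $t\in[0,L_2]$, both measured from $x_j$, where $L_1=\ell([x_i,x_j])$ and $L_2=\ell([x_j,x_k])$. For $x=x(s)$ and $y=y(t)$ the curve segment $[x,y]$ passes through $x_j$, hence $\ell([x,y])=s+t$, and Assumption~\ref{A1} then forces $\|x-y\|$ to depend on $(s,t)$ only through $s+t$: write $\|x-y\|=\rho(s+t)$ with $\rho$ nondecreasing and $\rho(0)=0$.

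The key geometric input is a \emph{uniform} linear lower bound $\rho(\tau)\ge c\,\tau$ on all of $[0,L_1+L_2]$, for some $c>0$. Near $\tau=0$ this is the chord‑versus‑arc estimate: a smooth regular curve is locally embedded and satisfies $\|x(a)-x(b)\|=|a-b|(1+O(|a-b|^2))$, so $\rho(\tau)\ge c_0\tau$ for $\tau\le\delta$ with suitable $c_0,\delta>0$ (this is precisely where the regularity of the embedding at the branching node is used, and where the topology‑preserving assumption guarantees that the two edges really meet only at $x_j$). For $\tau\ge\delta$, monotonicity gives $\rho(\tau)\ge\rho(\delta)\ge c_0\delta\ge \bigl(c_0\delta/(L_1+L_2)\bigr)\tau$, and combining the two ranges yields the claimed $c$. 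Writing $\tau=s+t$ and slicing, Fubini then gives
\[
  \langle \mu_{[x_i, x_j]}, \mu_{[x_j, x_k]}\rangle_{W^\ast}
  \;\le\; \iint_{[0,L_1]\times[0,L_2]} e^{-\rho(s+t)^2/\sigma_x^2}\,ds\,dt
  \;=\; \int_0^{L_1+L_2} m(\tau)\,e^{-\rho(\tau)^2/\sigma_x^2}\,d\tau,
\]
where $m(\tau)=\min(\tau,L_1,L_2,L_1+L_2-\tau)$ is the length of the slice $\{(s,t)\in[0,L_1]\times[0,L_2]:s+t=\tau\}$ (measured in $s$). Since $m(\tau)\le\tau$ and $\rho(\tau)\ge c\tau$, the right‑hand side is at most $\int_0^\infty \tau\,e^{-c^2\tau^2/\sigma_x^2}\,d\tau=\sigma_x^2/(2c^2)=O(\sigma_x^2)$. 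The extra power of $\sigma_x$ over a naive estimate comes exactly from $m(\tau)\le\tau\to0$ as $\tau\to0$, i.e.\ from the two edges touching at the single point $x_j$ rather than along a curve.

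For the self‑norms I would parametrize $[x_i,x_j]$ by arc length and use $\|x(s)-x(s')\|\le|s-s'|$ together with $\|\vec{t}(x(s))-\vec{t}(x(s'))\|\le\Lambda|s-s'|$, where $\Lambda$ is the maximal curvature of the arc; with $\sigma_t\asymp\sigma_x$, both Gaussians are then bounded below by $e^{-K(s-s')^2/\sigma_x^2}$ for a fixed $K>0$, so $\|\mu_{[x_i, x_j]}\|_{W^\ast}^2\ge\int_0^{L_1}\!\int_0^{L_1} e^{-K(s-s')^2/\sigma_x^2}\,ds\,ds'=\int_{-L_1}^{L_1}(L_1-|u|)e^{-Ku^2/\sigma_x^2}\,du=\Theta(\sigma_x)$, and likewise $\|\mu_{[x_j, x_k]}\|_{W^\ast}^2=\Theta(\sigma_x)$. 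Hence the right‑hand side of~\eqref{eq:L1} is $\ge c'\sigma_x$ for $\sigma_x$ small, while the left‑hand side is $O(\sigma_x^2)$, and the quotient is $O(\sigma_x)\to0$, which is the assertion.

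The main obstacle is the uniform bound $\rho(\tau)\ge c\,\tau$: its local part relies on the smoothness and regularity of the embedding at the branching node (so that chord and arc lengths are comparable), and it is here that Assumption~\ref{A1} does the essential work, both making $\rho$ a well‑defined function of the single variable $s+t$ and upgrading the purely local estimate to a global one via monotonicity. The remaining steps — discarding the orientation kernel and evaluating two elementary Gaussian integrals — are routine.
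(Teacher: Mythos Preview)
Your argument is correct, and it is genuinely different from the paper's. The paper does not exploit the fact that Assumption~\ref{A1} forces $\|x-y\|$ to depend on $(s,t)$ only through $s+t$; instead it performs an $\varepsilon$--splitting: each edge is cut at arc length $\varepsilon$ from $x_j$, the cross product breaks into four pieces, the near--near piece is bounded trivially by $\varepsilon^2$, and Assumption~\ref{A1} is used only to say that on the remaining three pieces the spatial kernel is at most $e^{-\|x_{j,\varepsilon}^\pm - x_j\|^2/\sigma_x^2}$. One then sets $\varepsilon=\sigma_x^p$ with $\tfrac12<p<1$ and $\rho=\sigma_x$ in a lower bound for the self--norms that is essentially the same as yours. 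Your route is cleaner and gives the sharp rate $O(\sigma_x)$ for the quotient directly, with no auxiliary parameter to optimize; the additional observation that the slice measure $m(\tau)\le\tau$ is what buys the extra power of $\sigma_x$ is a nice structural explanation absent from the paper. What the paper's cruder decomposition buys in return is portability: exactly the same four--term splitting, with the spatial kernel replaced by the orientation kernel, is reused to prove the companion Lemma for branching edges $i<j$, $i<k$, where there is no reason for $\|\vec t(x)-\vec t(y)\|$ to be a function of a single variable, so your reduction to a one--dimensional integral would not transfer.
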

	\begin{proof}
        First, notice that all the quantities involved in Equation \ref{eq:L1} are positive. Therefore, we only need to prove that the quotient of the left-hand side scalar product by the right-hand side sum is bounded from above by a function that converges to $0$ when $\sigma_x$ does. 
    \medbreak 
        Let us start with bounding the left-hand side from above. Let $\varepsilon > 0$ and let $x_{j, \varepsilon}^- \in [x_i, x_j]$ and $x_{j, \varepsilon}^+ \in [x_j, x_k]$ such that
    \begin{equation*}
		\ell([x_{j, \varepsilon}^-, x_j])=\ell([x_j, x_{j, \varepsilon}^+]) = \varepsilon.
	\end{equation*} 
        Thanks to the additivity of varifolds, we can decompose the left-hand side scalar product as follows 
	\begin{align*}
		\langle \mu_{[x_i, x_j]}, \mu_{[x_j, x_k]} \rangle_{W^\ast} 
	   	= &\: \langle \mu_{[x_{j, \varepsilon}^-, x_j]} , \mu_{[x_j, x_{j, \varepsilon}^+]} \rangle_{W^\ast} \\
        &+ \langle \mu_{[x_{j, \varepsilon}^-, x_j]} , \mu_{[x_{j, \varepsilon}^+, x_k]} \rangle_{W^\ast} \\
		&+ \langle \mu_{[x_i, x_{j, \varepsilon}^-]}, \mu_{[x_j, x_{j, \varepsilon}^+]} \rangle_{W^\ast} \\
        &+ \langle \mu_{[x_i, x_{j, \varepsilon}^-]}, \mu_{[x_{j, \varepsilon}^+, x_k]} \rangle_{W^\ast}.
	\end{align*}
		Leveraging Assumption \ref{A1}, we derive then the following inequality 
    \begin{align*}
		\langle \mu_{[x_i, x_j]}, \mu_{[x_j, x_k]} \rangle_{W^\ast} 
	   	\leq &\: \varepsilon^2 \\
        &+ \ell([x_j, x_k]) \varepsilon e^{-\|x_j - x_{j, \varepsilon}^+\|^2 / \sigma_x^2} \\
		&+ \ell([x_i, x_j]) \varepsilon e^{-\|x_{j, \varepsilon}^- - x_j\|^2 / \sigma_x^2} \\
		&+ \ell([x_i, x_j]) \ell([x_j, x_k]) e^{-\|x_{j, \varepsilon}^- - x_{j, \varepsilon}^+\|^2 / \sigma_x^2}.
	\end{align*}
        Now, because the paths are assumed to be smooth and regular, we have the following result on the length of an infinitesimal arc
	\begin{equation*}
		\forall x,y \in [x_i, x_k], \: \|x -y\| \underset{x \to y}{\sim} \ell([x, y]).
	\end{equation*}
        The previous upper bound is then of the form
	\begin{align*}
        \langle \mu_{[x_i, x_j]}, \mu_{[x_j, x_k]} \rangle_{W^\ast} 
		&\leq \varepsilon^2 + C_1 \varepsilon e^{-\frac{\varepsilon^2 + o(\varepsilon^2)}{\sigma_x^2}} + C_2e^{-4\frac{\varepsilon^2 + o(\varepsilon^2)}{\sigma_x^2}}
	\end{align*}
		where $C_1$ and $C_2$ do not depend on either $\varepsilon$, $\sigma_x$ or $\sigma_t$. 
    \medbreak  
        Then, let us bound the two norms in the right-hand side from below. For any $0 < \rho < \min(\ell([x_i, x_j]), \ell([x_j, x_k]))$, we can write
	\begin{equation*}
		\|\mu_{[x_i, x_j]}\|_{W^\ast}^2 \geq \iint_{[x_i, x_j]^2}\mathbbm{1}_{\{y\in B(x, \rho)\}}e^{-\frac{\|x - y\|^2}{\sigma_x^2}}e^{-\frac{\|\vec{t}(x) - \vec{t}(y)\|^2}{\sigma_t^2}}d\ell(x)d\ell(y)
    \end{equation*}
        The curves $[x_i, x_j]$ and $[x_j, x_k]$ are assumed to be orientable, that is $x \mapsto \vec{t}(x)$ is a smooth map on both domains. Therefore it is $K$-Lipschitz for some $K>0$ and we have
    \begin{equation*}
		\|\mu_{[x_i, x_j]}\|_{W^\ast}^2 \geq \iint_{[x_i, x_j]^2}\mathbbm{1}_{\{y\in B(x, \rho)\}}e^{-\frac{\rho^2}{\sigma_x^2}}e^{-\frac{K^2\rho^2}{\sigma_t^2}}d\ell(x)d\ell(y)\geq C_3\rho e^{-\frac{\rho^2}{\sigma_x^2}}e^{-K^2\frac{\rho^2}{\sigma_t^2}} 
	\end{equation*}
	    and similarly $\|\mu_{[x_j, x_k]}\|_{W^\ast}^2 \geq C_4\rho e^{-\frac{\rho^2}{\sigma_x^2}}e^{-K^2\frac{\rho^2}{\sigma_t^2}}$ where $C_3$ and $C_4$ do not depend on either $\rho$, $\sigma_x$ or $\sigma_t$.
	\medbreak
        Finally, let us fix $\varepsilon=\sigma_x^p$ for some $\frac{1}{2} < p < 1$ and $\rho=\sigma_x$. Then we obtain an upper bound on the quotient that converges to $0$ when $\sigma_x$ converges to $0$ with the condition that $\sigma_x \asymp \sigma_t$. \qedsymbol
	\end{proof}
	\begin{lemma} 
		Let $i, j$ and $k$ be three adjacent nodes of $T$ such that $i < j$ and $i < k$. Assume that Assumptions \ref{A2} and \ref{A3} hold. Then we have
	\begin{equation}
        \label{L2}
		\langle \mu_{[x_i, x_j]}, \mu_{[x_i, x_k]}\rangle_{W^\ast} \underset{\substack{\vspace{.05cm} \\ \sigma_x, \sigma_t \to 0 \vspace{.1cm} \\ \sigma_x \asymp \sigma_t}}{=} o\big(\|\mu_{[x_i, x_j]}\|_{W^\ast}^2 + \|\mu_{[x_i, x_k]}\|_{W^\ast}^2\big).
	\end{equation}
	\end{lemma}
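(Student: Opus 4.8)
The plan is to follow the scheme of the preceding lemma. All three quantities in \eqref{L2} being nonnegative, it suffices to bound the quotient of $\langle \mu_{[x_i, x_j]}, \mu_{[x_i, x_k]}\rangle_{W^\ast}$ by $\|\mu_{[x_i, x_j]}\|_{W^\ast}^2 + \|\mu_{[x_i, x_k]}\|_{W^\ast}^2$ from above by a quantity vanishing as $\sigma_x \to 0$. The lower bound on the denominator is obtained exactly as in the preceding proof: choosing $\rho = \sigma_x$ and using that $x \mapsto \vec t(x)$ is $K$-Lipschitz on each branch yields $\|\mu_{[x_i, x_j]}\|_{W^\ast}^2 \ge C\,\sigma_x\, e^{-1}\, e^{-K^2 \sigma_x^2/\sigma_t^2}$, and likewise for $[x_i, x_k]$; since $\sigma_t \asymp \sigma_x$ the exponential factor stays bounded below by a positive constant, so the denominator is $\gtrsim \sigma_x$. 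It thus remains to prove $\langle \mu_{[x_i, x_j]}, \mu_{[x_i, x_k]}\rangle_{W^\ast} = o(\sigma_x)$.

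To this end, fix $\varepsilon > 0$ small enough that the arc-length-$\varepsilon$ neighborhoods $[x_i, x_\varepsilon^{(j)}] \subset [x_i, x_j]$ and $[x_i, x_\varepsilon^{(k)}] \subset [x_i, x_k]$ of the branching node both lie inside the neighborhood $U_i$ of Assumption \ref{A3}. By additivity of varifolds and bilinearity of the inner product, decompose the scalar product into four terms according to whether each integration variable lies in the $\varepsilon$-neighborhood of $x_i$ or in the complementary arc. The three terms involving at least one \emph{far} arc are controlled by the spatial kernel alone: since the embedding preserves the topology of $T$, the arcs $[x_\varepsilon^{(j)}, x_j]$ and $[x_\varepsilon^{(k)}, x_k]$ are disjoint compact subsets of $\bR^n$ neither of which contains $x_i$, so each relevant pair of arcs is separated in $\bR^n$ by a fixed distance $\delta = \delta(\varepsilon) > 0$; bounding the orientation kernel by $1$, each of these terms is at most $\ell([x_i, x_j])\,\ell([x_i, x_k])\, e^{-\delta^2/\sigma_x^2}$, which is $o(\sigma_x^m)$ for every $m$.

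The delicate term is the \emph{near--near} one, $\langle \mu_{[x_i, x_\varepsilon^{(j)}]}, \mu_{[x_i, x_\varepsilon^{(k)}]}\rangle_{W^\ast}$, where the two variables may be arbitrarily close in $\bR^n$ so that the spatial kernel gives nothing; this is exactly where Assumptions \ref{A2} and \ref{A3} enter. Parametrize the two arcs by arc length $s, t \in [0,\varepsilon]$ measured from $x_i$. Assumption \ref{A2} states precisely that $\|\vec t(x) - \vec t(y)\|$ is a well-defined nondecreasing function $g(s+t)$ of $s+t$; specializing to $t = 0$ and to $s = 0$ and invoking Assumption \ref{A3} gives $g(s) \ge s^a$ and $g(t) \ge t^a$ for $s, t \in [0,\varepsilon]$, after shrinking $\varepsilon$ once more if necessary to absorb the case where the two branches happen to have distinct tangents at $x_i$; by monotonicity of $g$ it follows that $g(s+t) \ge \max(s,t)^a \ge 2^{-a}(s+t)^a$. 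Here \ref{A3} is genuinely needed and \ref{A2} is indispensable as a bridge, since \ref{A3} only controls each tangent's deviation from the \emph{common} initial direction at $x_i$, not the cross-difference $\|\vec t(x) - \vec t(y)\|$ itself. Bounding the spatial kernel by $1$ and substituting, the near--near term is at most
\begin{equation*}
\iint_{[0,\varepsilon]^2} e^{-2^{-2a}(s+t)^{2a}/\sigma_t^2}\, ds\, dt \;\le\; \int_0^{2\varepsilon} u\, e^{-2^{-2a}u^{2a}/\sigma_t^2}\, du \;=\; O\big(\sigma_t^{2/a}\big),
\end{equation*}
the constant depending only on $a$ (the change of variables $v = 2^{-2a}u^{2a}/\sigma_t^2$ turns the integral into a Gamma-type integral, whose convergence uses $a > 0$).

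Finally, because $\sigma_t \asymp \sigma_x$ and $0 < a < 2$, we have $\sigma_t^{2/a} = O(\sigma_x^{2/a}) = o(\sigma_x)$, the exponent $2/a - 1$ being positive exactly because $a < 2$. Collecting the four estimates gives $\langle \mu_{[x_i, x_j]}, \mu_{[x_i, x_k]}\rangle_{W^\ast} = o(\sigma_x)$ against a denominator $\gtrsim \sigma_x$, which establishes \eqref{L2}. The crux -- and the only step that differs substantially from the proof of the preceding lemma -- is the near--near term: everything hinges first on turning the angular spreading jointly encoded by \ref{A2} and \ref{A3} into the quantitative bound $\|\vec t(x) - \vec t(y)\| \gtrsim (s+t)^a$, and then on recognizing that the resulting Gaussian-type integral is of order $\sigma_t^{2/a}$, which beats $\sigma_x$ precisely when $a < 2$ -- which is exactly why \ref{A3} is imposed and why it must be strictly stronger than \ref{A2}.
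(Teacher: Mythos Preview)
Your argument is correct, and it reaches the same conclusion, but the way you allocate the assumptions across the four terms of the decomposition is genuinely different from the paper's. After the common $\varepsilon$-splitting, the paper keeps $\varepsilon$ \emph{moving} with $\sigma_x$ (taking $\varepsilon=\sigma_x^{p}$ with $\tfrac12<p<\tfrac1a$): it bounds the near--near term trivially by $\varepsilon^2$, and it controls the three remaining terms through the \emph{angular} kernel, using Assumption~\ref{A2} to push the minimum of $\|\vec t(x)-\vec t(y)\|$ on each region to the boundary point and Assumption~\ref{A3} to lower-bound that boundary value by $\varepsilon^{a}$, obtaining $\varepsilon^2 + C_1\varepsilon e^{-\varepsilon^{2a}/\sigma_t^2}+C_2 e^{-4\varepsilon^{2a}/\sigma_t^2}$; the constraint $a<2$ then enters as the nonemptiness of the window $\tfrac12<p<\tfrac1a$. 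You instead \emph{freeze} $\varepsilon$: the three far terms are dispatched through the \emph{spatial} kernel via a compactness/topology-preservation argument (so Assumptions~\ref{A2}--\ref{A3} play no role there), while the near--near term is the one carrying all the work, integrated directly to $O(\sigma_t^{2/a})$ after using \ref{A2} to reduce $\|\vec t(x)-\vec t(y)\|$ to a monotone function of $s+t$ and \ref{A3} to lower-bound it by $(s+t)^a$ up to a constant. Your route makes it very transparent that $a<2$ is exactly the condition $\sigma_t^{2/a}=o(\sigma_x)$; the paper's route stays closer in form to the proof of the preceding lemma and does not rely on the spatial separation of the far arcs. Both are valid and comparably short.
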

	\begin{proof}
        Again, notice that all the quantities involved are positive. Therefore, we only need to prove that the quotient of the left-hand side scalar product by the right-hand side sum is bounded from above by a function that converges to $0$ when $\sigma_t$ does. 
    \medbreak    
        Let us bound the left-hand side from above. Let $\varepsilon > 0$ such that $B(x_i, \varepsilon) \subset U_i$ and let $x_{i, \varepsilon}^L \in [x_i, x_j]$ and $x_{i, \varepsilon}^R \in [x_i, x_k]$ such that
    \begin{equation*}
		\ell([x_i, x_{i, \varepsilon}^L])=\ell([x_i, x_{i, \varepsilon}^R]) = \varepsilon.
	\end{equation*} 
        We have that
	\begin{align*}
		\langle \mu_{[x_i, x_j]}, \mu_{[x_j, x_k]} \rangle_{W^\ast} 
	   	= &\: \langle \mu_{[x_i, x_{i, \varepsilon}^L]} , \mu_{[x_i, x_{i, \varepsilon}^R]} \rangle_{W^\ast} \\
        &+ \langle \mu_{[x_i, x_{i, \varepsilon}^L]} , \mu_{[x_{i, \varepsilon}^R, x_k]} \rangle_{W^\ast} \\
		&+ \langle \mu_{[x_{i, \varepsilon}^L, x_j]}, \mu_{[x_i, x_{i, \varepsilon}^R]} \rangle_{W^\ast} \\
        &+ \langle \mu_{[x_{i, \varepsilon}^L, x_j]}, \mu_{[x_{j, \varepsilon}^r, x_k]} \rangle_{W^\ast}.
	\end{align*}
		Leveraging Assumption \ref{A2}, we derive the following inequality
    \begin{align*}
		\langle \mu_{[x_i, x_j]}, \mu_{[x_j, x_k]} \rangle_{W^\ast} 
	   	\leq &\: \varepsilon^2 \\
        &+ \ell([x_i, x_k]) \varepsilon e^{-\|\vec{t}(x_i) - \vec{t}(x_{i, \varepsilon}^R)\|^2 / \sigma_t^2} \\
		&+ \ell([x_i, x_j]) \varepsilon e^{-\|\vec{t}(x_{i, \varepsilon}^L) - \vec{t}(x_i)\|^2 / \sigma_t^2} \\
		&+ \ell([x_i, x_j]) \ell([x_i, x_k]) e^{-\|\vec{t}(x_{i, \varepsilon}^L) - \vec{t}(x_{i, \varepsilon}^R)\|^2 / \sigma_t^2}.
	\end{align*}
        Leveraging then Assumption \ref{A3}, we obtain a lower bound of the form
	\begin{align*}
        \langle \mu_{[x_i, x_j]}, \mu_{[x_j, x_k]} \rangle_{W^\ast} 
		&\leq \varepsilon^2 + C_1\varepsilon e^{-\frac{\varepsilon^{2a}}{\sigma_t^2}} + C_2e^{-4\frac{\varepsilon^{2a}}{\sigma_t^2}}
	\end{align*}
		where $C_1$ and $C_2$ do not depend on either $\varepsilon$, $\sigma_x$ or $\sigma_t$.
        \medbreak
        Then, let us bound the two norms in the right-hand side from below. The map $x \to \vec{t}(x)$ is $K$-Lipschitz on $[x_i, x_j]$ and $[x_i, x_k]$ for some $K>0$, and for any $0 < \rho < \min(\ell([x_i, x_j]), \ell([x_i, x_k]))$, we then have
	\begin{equation*}
		\|\mu_{[x_i, x_j]}\|_{W^\ast}^2 \geq \iint_{[x_i, x_j]^2}\mathbbm{1}_{\{y\in B(x, \rho)\}}e^{-\frac{\rho^2}{\sigma_x^2}}e^{-\frac{K^2\rho^2}{\sigma_t^2}}d\ell(x)d\ell(y)\geq C_3\rho e^{-\frac{\rho^2}{\sigma_x^2}}e^{-K^2\frac{\rho^2}{\sigma_t^2}}
	\end{equation*}
	    and similarly $\|\mu_{[x_i, x_k]}\|_{W^\ast}^2 \geq C_4re^{-\frac{\rho^2}{\sigma_x^2}}e^{-K^2\frac{\rho^2}{\sigma_t^2}}$ where $C_3$ and $C_4$ do not depend on either $\rho$, $\sigma_x$ or $\sigma_t$.
		\medbreak
        Finally, let us fix $\varepsilon=\sigma_x^p$ for some $\frac{1}{2} < p < \frac{1}{a}$ and $\rho=\sigma_x$. Then we obtain an upper bound on the quotient that converges to $0$ when $\sigma_x$ converges to $0$ with the condition that $\sigma_x \asymp \sigma_t$. \qedsymbol
	\end{proof}
	    Let us sketch out the proof of Proposition \ref{prop:shortest_path_distance}. First, we write the varifold distance between the paths (in $\bR^n$) from the root to $i$ and $j$ as the distance between the two paths from their common ancestor, which is one of the nodes visited by the shortest path in $T$. Then we write each of the two paths as the union of the edges composing them and decompose their squared distance accordingly. For all scalar products thus generated, we apply Lemma 1, unless one of the nodes considered is the common ancestor and a branching point, in which case we apply Lemma 2. 
    \medbreak
        As a consequence of Proposition \ref{prop:shortest_path_distance}, we also have that $\Delta$ asymptotically satisfies the triangular inequality
	\begin{equation}
		\Delta(i, k) \underset{\substack{\vspace{.05cm} \\ \sigma_x \to 0 \vspace{.05cm} \\ \sigma_t \to 0}}{\leq} \Delta(i, j) + \Delta(j, k) + o(\sup_{e \in E}\Delta(e))
	\end{equation}
        and the four-points condition for a distance matrix to be a shortest-path distance on a tree \cite{pereira_note_1969}
	  \begin{equation}
	    \Delta(i, k) + \Delta(j, l) \leq \max(\Delta(i, j) + \Delta(k, l), \Delta(j, k) +  \Delta(l, i)) + o(\sup_{e \in E}\Delta(e)).
	\end{equation}
        In practice, the choice of $\sigma_x$ and $\sigma_t$ is a trade-off between precision and robustness. Small enough so that the similarity measure $\Delta$ characterizes the topology of $T$ sufficiently for us to then apply distance-based inference methods of the literature, that is so that the equivalence in Proposition \ref{prop:shortest_path_distance} is close to an equality and the two previous inequalities are almost satisfied. But large enough that $\Delta$ is not too sensitive to noise in the data, as discussed in Section \ref{sec:varifold}.
	
	\section{Conclusion}
	    In this work, we tackled the problem of inferring the topology of a rooted tree embedded in $\bR^n$ from the set of curves defined by the embedding of the paths from the root to the nodes. We showed that kernel metrics between curves represented as oriented varifolds define a similarity measure between the nodes of the tree that is consistent with the topology of the latter. As a next step, we will build upon our new framework to address the reconstruction of cell lineage trees described by the integral curves of a RNA velocity field.
	
	\begin{credits}
		\subsubsection{\ackname} This research is funded by the Deutsche Forschungsgemeinschaft (DFG, German Research Foundation) under Germany´s Excellence Strategy – The Berlin Mathematics Research Center MATH+ (EXC-2046/1, project ID: 390685689) and part of this work was carried out at the Erwin Schrödinger International Institute for Mathematics and Physics (ESI) during the Thematic Programme on Infinite-dimensional Geometry: Theory and Applications.
	\end{credits}
	%
	%
	%
	\bibliographystyle{splncs04}
	\bibliography{GSI25-136.bib}
	
\end{document}